\def\th@plain{%
  \theorem@notefont{}
  \itshape 
}
\def\th@definition{%
  \theorem@notefont{}
  \normalfont 
}
\newcommand{\f}{\varphi}
\newcommand{\p}{\alpha}
\newcommand{\pp}{\beta}
\newcommand{\leftb}{\scalebox{0.9}{\normalfont\texttt{[}}}
\newcommand{\rightb}{\scalebox{0.9}{\normalfont\texttt{]}}}
\newcommand{\B}[1]{\leftb #1 \rightb}
\newcommand{\dia}[1]{\langle#1\rangle}
\begin{document}
\title{On Graded Concurrent PDL}
%
%
\author{CHUN-YU LIN\inst{1}}
\authorrunning{C.Y.LIN}
%
\institute{Department of Logic, Faculty of Arts, Charles University \\Institute of Computer Science, Pod Vod\'arenskou v$\check{e}\check{z}$\'i 271/2
\email{lin@cs.cas.cz}\\
\url{https://www.cs.cas.cz/staff/lin/en} }
\maketitle              
\begin{abstract}
Propositional Dynamic Logic, $\mathsf{PDL}$, is a modal logic designed to formalize the reasoning about programs. By extending accessibility between states to states and state sets, concurrent propositional dynamic logic $\mathsf{CPDL}$, is introduced to include concurrent programs due to Peleg and Goldblatt. We study a many-valued generalization of $\mathsf{CPDL}$ where the satisfiability and the reachability relation between states and state sets are graded over a finite \L ukasiewicz chain. Finitely-valued dynamic logic has been shown to be useful in formalizing reasoning about program behaviors under uncertainty. We obtain completeness results for all finitely valued $\mathsf{PDL}$. 
\keywords{Propositional dynamic logic  \and Many valued modal logic \and Modal logic.}
\end{abstract}
\section{Introduction}
Propositional dynamic logic, $\mathsf{PDL}$, has been introduced as a logic of imperative computer programs \cite{FischerLadner1979}, but it has found many applications in formalizing reasoning about actions in general. Many-valued versions of $\mathsf{PDL}$ aim at formalizing reasoning about action in contexts where imprecise concepts are involved (e.g.,~\cite{Sedlar2021c}). For example, goals may be specified using vague notions, or weights may be attached to actions depending on the amount of resources their execution consumes. The concurrent propositional dynamic logic $\mathsf{CPDL}$ \cite{Goldblatt1992a,Peleg1987} extends $\mathsf{PDL}$ with an operator representing the parallel execution of two actions: $\B{\p \cap \pp}\f$ means that a successful parallel execution of action $\p$ and $\pp$ is guaranteed to make $\f$ true. 

In this paper, we outline a version with many values of $\mathsf{CPDL}$. Our logic is based on propositional dynamic models in which formulas and accessibility relations are evaluated in finite {\L}ukasiewicz chains; therefore, we extend the framework of \cite{Sedlar2021c} with the parallel execution operator. Our main technical result is a soundness and completeness theorem for logic. 
\section{Concurrent PDL over \bf\L$_n$}

Let $\Pi$ be a countable set of program variables $\pi_0,\pi_1.\ldots$ which are used to denote the atomic programs. The intrinsic meaning of atomic programs is not examined further. Instead, we concentrate on the complex programs generated by operations on the given ones. Let $\textbf{P}$ be a countable set of propositional variables. Let \L$_n$ be a finite \L ukasiewicz chain. The language $\mathcal{L}_n$ for concurrent propositional dynamic logic can be separated into two categories-the set of programs $\Pi_n$ and formulas $\Phi_n$ defined mutually recursively in Barckus-Naur form as follows:
\begin{itemize}
    \item $\Pi_n$\mbox{   } $\pi \coloneqq \bar{\pi} \;|\; \pi_0 \cup \pi_1 \;|\; \pi_0 \cap \pi_1 \;|\; \pi_0 ; \pi_1 \;| \; \pi^{*}\;|\; \varphi ? $
    \item $\Phi_n$\mbox{   } $\varphi \coloneqq p\;|\; \bar{c}\;|\; \varphi_0 \vee \varphi_1 \;|\; \varphi_0 \wedge \varphi_1 \;|\; \varphi_0 \to \varphi_1 \;| \; [\pi] \varphi\;|\; \langle \pi \rangle \varphi$
\end{itemize}
where $\bar{\pi} \in \Pi$, $p \in \textbf{P}$ and $c \in$ \L$_n$. From \cite{goldblatt1992parallel}, we know that two modal operators $[\pi]$ and $\langle \pi \rangle$ are not interdefinable via $\neg$ in concurrent PDL.

The intended meanings for the operations on the programs are as follows. 
\begin{itemize}
    \item $\pi_0 ; \pi_1$\mbox{   } execute $\pi_0$ then $\pi_1$,
    \item $\pi_0 \cup \pi_1$\mbox{   } execute $\pi_0$ or $\pi_1$ non-deterministically,
    \item $\pi_0 \cap \pi_1$\mbox{   } execute $\pi_0$ and $\pi_1$ concurrently,
    \item $\pi^{*}$ \mbox{   } execute $\pi$ for some finite number of times,
    \item $\varphi ?$\mbox{   } test $\varphi$ : if $\varphi$ is true, then continue; otherwise fail.
\end{itemize}
In the context of concurrency, the results of execution of an initial state $s$ will be a set of states $T$ rather than a single state. Therefore, the accessibility relation on a set $S$ in Kripke semantics of propositional dynamic logic should be generalized to a set of pair $\langle s,T \rangle$, with $s \in S$ and $T \subseteq S$. We then call the graded accessibility relation as a reachable $\textbf{\L}_n$-relation.
\begin{definition}
A {\it reachable $\textbf{\L}_n$-relation} on a set $S$ is a function from $S \times \mathcal{P}(S)$ to $\textbf{\L}_n$.
\end{definition}
The operations on reachable $\textbf{\L}_n$-relation are defined as another reachable $\textbf{\L}_n$-relations in the following paragraph.
\begin{definition}
Let $R , Q$ be two reachable $\textbf{\L}_n$-relations on a set $S$, $s\in S$ and $T,W \subseteq S$.
\begin{itemize}
    \item $\iota(s,T)=\begin{cases} 1 & \text{if } T =\{ s \}\\
                                    0 & \text{otherwise}  \\           \end{cases}$,
    \item $(R\cup Q)(s,T)= \bigvee_{t \in T}(R(s,T)\vee Q(s,T) )$,
    \item $(R \circ Q)(s,T)=\bigvee_{U \subseteq S} \bigvee\{  R(s,U)\odot \bigodot_{u\in U}Q(u,T_u)| T = \bigcup_{u \in U}T_u  \}$,
    \item $R^{(0)}=\iota$, $R^{(n+1)}=\iota \cup (R\circ R^{(n)})$, and $R^{*}=\bigvee\{ R^{(n)}| n \geq 0 \}$,
    \item $(R \otimes Q)(s,T\cup W)=R(s,T)\odot Q(s,W)$.
\end{itemize}
\begin{lemma}
For any reachable $\textbf{\L}_n$-relations $Q,Q',R,R'$ on a set $S$, we have the following properties.
\begin{enumerate}
    \item $Q \leq Q'$ implies $R \circ Q \leq R\circ Q'$
    \item $(R \cup R') \circ Q = R \circ Q \cup R' \circ Q $
    \item $R^{(n)} \leq R^{(n+1)}$
\end{enumerate}
\end{lemma}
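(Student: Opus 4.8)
The plan is to prove the three claims in order, using the first (monotonicity of composition in its second argument) inside the induction for the third. The only facts about the finite {\L}ukasiewicz chain $\textbf{\L}_n$ I will invoke are that $\odot$ is monotone in each argument and distributes over joins, $c\odot\bigvee_i a_i = \bigvee_i(c\odot a_i)$ --- both standard, the second being immediate in a finite chain since a join there is simply the largest of its arguments. I will also use that $\cup$ computes a join pointwise, so it is monotone in each argument and absorbing, $A\le A\cup B$.

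For~(1), fix $s\in S$ and $T\subseteq S$. Unwinding the definition of $\circ$, the values $(R\circ Q)(s,T)$ and $(R\circ Q')(s,T)$ are joins over the \emph{same} index set, namely pairs consisting of a set $U\subseteq S$ and a family $(T_u)_{u\in U}$ of subsets of $S$ with $\bigcup_{u\in U}T_u=T$; the corresponding summands are $R(s,U)\odot\bigodot_{u\in U}Q(u,T_u)$ and $R(s,U)\odot\bigodot_{u\in U}Q'(u,T_u)$. From $Q\le Q'$ we get $Q(u,T_u)\le Q'(u,T_u)$ for every $u$, hence $\bigodot_{u\in U}Q(u,T_u)\le\bigodot_{u\in U}Q'(u,T_u)$, and monotonicity of $\odot$ then gives the summandwise inequality; joining over the index set yields $(R\circ Q)(s,T)\le(R\circ Q')(s,T)$.

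For~(2), again fix $s,T$ and expand $((R\cup R')\circ Q)(s,T)$ the same way. For a fixed $U$ and decomposition $(T_u)_{u\in U}$ of $T$, write $P=\bigodot_{u\in U}Q(u,T_u)$; the corresponding summand is $(R(s,U)\vee R'(s,U))\odot P$, which by distributivity of $\odot$ over $\vee$ equals $(R(s,U)\odot P)\vee(R'(s,U)\odot P)$. Since the index set depends only on $T$ and $Q$, not on the left argument, I can split the big join using $\bigvee_i(a_i\vee b_i)=\bigvee_i a_i \vee \bigvee_i b_i$ and recognise the two resulting joins as $(R\circ Q)(s,T)$ and $(R'\circ Q)(s,T)$; their join is $((R\circ Q)\cup(R'\circ Q))(s,T)$, which is~(2).

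For~(3), I induct on $n$. The base case is $R^{(0)}=\iota\le\iota\cup(R\circ\iota)=R^{(1)}$, i.e.\ absorption. For the step, assume $R^{(n)}\le R^{(n+1)}$; then~(1) gives $R\circ R^{(n)}\le R\circ R^{(n+1)}$, and monotonicity of $\cup$ in its right argument gives $R^{(n+1)}=\iota\cup(R\circ R^{(n)})\le\iota\cup(R\circ R^{(n+1)})=R^{(n+2)}$. I expect no genuine obstacle here; the one point that deserves a line of care is the meaning of the possibly infinite monoidal product $\bigodot_{u\in U}$ when $S$, hence $U$, is infinite --- but because $\textbf{\L}_n$ is finite such a product is $0$ unless all but finitely many factors equal $1$, so monotonicity and distributivity reduce to the finite case.
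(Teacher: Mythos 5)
Your proof is correct. The paper states this lemma without any proof (it is tucked inside the definition environment for the relation operations), so there is nothing in the source to compare against; your argument --- summandwise monotonicity over a common index set for (1), distributivity of $\odot$ over joins for (2), and induction on $n$ using (1) together with absorption of $\cup$ for (3) --- is the standard one and surely what was intended. Two small remarks. First, you read $\cup$ as the pointwise join $R(s,T)\vee Q(s,T)$; the paper's literal definition carries a spurious outer $\bigvee_{t\in T}$, under which the value at $T=\emptyset$ would be the empty join $0$ and the absorption $A\le A\cup B$ needed in your base case could fail, so your reading is the one under which item (3) is actually true. Second, your closing caveat about the meaning of $\bigodot_{u\in U}$ for infinite $U$ (defined as an infimum of finite partial products in the finite chain, where monotonicity still holds) is exactly the right point to flag, since the paper never addresses it.
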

\end{definition}
A {\it $\textbf{\L}_n$-valuation} $\nu$ is any non-modal homomorphism from $\mathcal{L}_n$ to $\textbf{\L}_n$, i.e. $\nu : \mathcal{L}_n \to \textbf{\L}_n$ such that $\nu(\bar{c})= c,$ and $\nu(\varphi \star \psi) = \nu(\varphi) \ast \nu(\psi)$ where $\star , \ast \in \{\wedge,\vee,\to \}$ denote the Boolean operations and $\textbf{L}_n$-operations respectively. 
\begin{definition}
Let $\Theta$ be a set of formulas and $\varphi$ be a formula in $\mathcal{L}_n$. We say that $\varphi$ is a $\textbf{\L}_n$-semantics consequence of $\Theta$ if $$\nu [\Theta]=\{ 1 \} \text{ implies } \nu(\varphi)=1 $$ for any $\textbf{\L}_n$-valuation $\nu$. In this case, we write $\Theta \Vdash_{\textbf{\L}_n} \varphi$.
\end{definition}

We then introduce the world semantics for concurrent PDL. 

\begin{definition}
A {\it $\textbf{\L}_n$-frame} is a pair $\mathcal{F}=\langle S , R \rangle$, where S is a nonempty set and $R : \Pi \to \textbf{\L}_n^{S\times\mathcal{P}(S)}$. A {\it $\textbf{\L}_n$-model} is $M=\langle S,R,V \rangle$ where $\langle S,R \rangle$ is a $\textbf{\L}_n$-frame and $V :\textbf{P} \to {\textbf{\L}_n}^S $. Given a $\textbf{\L}_n$-model $M$, the {\it $M$-interpretation} is a function $\mathcal{I}_M : ((\Pi_n\times S \times \mathcal{P}(S)) \cup (\Phi_n \times S)) \to \textbf{\L}_n$ such that for any $\bar{\pi}\in \Pi, \pi_0,\pi_1 \in \Pi_n,s \in S, T \in \mathcal{P}(S) , p\in \textbf{P}, c \in \textbf{\L}_n$, and $\varphi, \psi \in \mathcal{L}_n$ :
\begin{itemize}
    \item $\mathcal{I}_M(\bar{\pi},s,T)=R(\bar{\pi})(s,T)$,
    \item $\mathcal{I}_M(p,s)=V(p)(s)$,
    \item $\mathcal{I}_M(\bar{c},s)= c $,
    \item $\mathcal{I}_M(\varphi \star \phi, s)=\mathcal{I}_M(s) \ast \mathcal{I}_M(s) $ where $\star , \ast \in \{\wedge,\vee,\to \}$ denote the boolean operations and $\textbf{L}_n$-operations respectively,
    \item $\mathcal{I}_M([\pi]\varphi,s)= \bigwedge_{T \subseteq S}(\mathcal{I}_M(\pi,s,T) \to \bigwedge_{t \in T} \mathcal{I}_M(\varphi,t))$,
    \item $\mathcal{I}_M(\langle \pi \rangle \varphi,s)=\bigvee_{T \subseteq S}(\mathcal{I}_M(\pi,s,T) \odot \bigwedge_{t \in T} \mathcal{I}_M(\varphi,t))$\\
    Denote $\mathcal{I}_M(\pi,s,T)$ as a $\textbf{L}_n$-reachable relation $R_{\pi}(s,T)$
    \item $\mathcal{I}_M(\pi_0 \cup \pi_1,s,T)= ( R_{\pi_0} \cup R_{\pi_1})(s,T)$,
    \item $\mathcal{I}_M(\pi_0;\pi_1,s,T)=(R_{\pi_0}\circ R_{\pi_1})(s,T)$,
    \item $\mathcal{I}_M(\pi_0 \cap \pi_1,s,T)= (R_{\pi_0}\otimes R_{\pi_1})(s,T)$,
    \item $\mathcal{I}_M(\pi^{*},s,T)= (R_{\pi})^{*}(s,T)$,
    \item $\mathcal{I}_M(\varphi ?,s,T)= \begin{cases}
    \mathcal{I}_M(\varphi,s) & \text{if } T=\{ s\}\\
    0 & \text{otherwise.} \\
    \end{cases}$
\end{itemize}
\end{definition}
Clearly, fixed an $s \in S$, a $M$-interpretation is a $\textbf{\L}_n$-valuation. A formula $\varphi$ is called valid in a $\textbf{\L}_n$-model $M$ if $\mathcal{I}_M(\varphi,s)=1$ for all $s\in S$. 

\section{Finite MV-chains}
We briefly introduce the definitions and basic properties of finite MV-chains in this section. From more detailed introduction, we refer to the handbooks\cite{cintula2011handbook}.
\begin{definition}
A finite MV-chain is an algebraic structure $$\text{\L}_n = \langle A,\lor,\land,\to,\odot,\bar{0},\bar{1} \rangle$$ with $|A|=n$ such that : 
\begin{itemize}
    \item $\langle A,\lor,\land,\bar{0},\bar{1} \rangle$ is a finite bounded lattice,
    \item $\langle A, \odot, \bar{1} \rangle$ is a finite commutative monoid,
    \item Define the total ordering $\leq$ as $a\leq b$ iff $a \land b=b$ iff $a \lor b =a$,
    \item $\odot$ is residuated with $\to$, i.e. for all $a,b,c \in A$, $a\odot b \leq c$ iff $b \leq a \to c$,
    \item $(a \to b) \vee (b \to a) = \bar{1}$ for all $a ,b \in A$.
    \item $a \wedge b = a\odot(a \to b)$ for all $a ,b \in A$. \item Define $\neg a \coloneqq a \to \bar{0}$ satisfying $\neg \neg a = a$ for all $a \in A$.
\end{itemize}
\end{definition}
The above definition is an abstract formulation. According to Corollary 3.5.4 in \cite{cignoli2013algebraic}, any finite MV-chain $\text{\L}_n$ is isomorphic to the following more concrete finite MV-chains $\textbf{\L}_n$ for any $n \geq 2$.
$$\textbf{\L}_n = \langle \{\frac{0}{n-1},\frac{1}{n-1},\ldots,\frac{n-1}{n-1}  \} ,\to,\odot,\sim,\bar{0},\bar{1} \rangle $$
where $a \odot b \coloneqq max\{0,a+b-1 \} $, $a \to b \coloneqq min \{1,1-a+b  \}$, and $\sim a \coloneqq 1-a$ for any $a,b \in \{\frac{0}{n-1},\frac{1}{n-1},\ldots,\frac{n-1}{n-1}  \}$.
Note that $a \vee b \coloneqq (a \to b) \to b (=max\{a,b\})$ and $a \wedge b \coloneqq a \odot(a \to b)(= min \{a,b\})$. The truncated addition $\odot$ is the \L ukasiewicz t-norm. In this article, we will use $\textbf{\L}_n$ as our definition of finite MV-chain.

\section{Proof System}

For each $n \geq 2$, the axiomatic system $\textbf{P\L}_n$ is a Hilbert-style proof system consists the following axiom schemata and rule : 
\begin{itemize}
    \item $ \varphi \to (\psi \to \varphi)$,
    \item $ (\varphi \to \psi) \to ((\psi \to \chi) \to (\varphi \to \chi)) $,
    \item $( (\varphi \to \psi) \to \psi) \to ((\psi \to \varphi) \to \varphi) $,
    \item $(\neg \psi \to \neg \varphi) \to (\varphi \to  \psi)$,
    \item $\overline{c \ast d} \leftrightarrow \bar{c} \star \bar{d} $ where $\star , \ast \in \{\wedge,\vee,\to \}$ denote the boolean operations and $\textbf{L}_n$-operations respectively,
    \item (MP): from $\varphi $ and $\varphi\to \psi$ infer $\psi$.
\end{itemize}
We give the definition of a formula $\varphi$ being derivable in $\textbf{P\L}_n$ from a set of formulas $\Theta$.
\begin{definition}
A formula $\varphi $ in $\mathcal{L}_n$ is derivable from a set of formulas $\Theta$ in $\textbf{P\L}_n$ if there exists a finite sequence of formulas $\varphi_0,\varphi_1,\ldots,\varphi_m$ such that $\varphi_m$ is $\varphi$ and each $\varphi_i$ for $i<m$ is either an instance of an axiom schemata, a member of $\Theta$, or follows from $\varphi_j$ and $\varphi_k$ using MP for $j,k<i$.
\end{definition}
We use $\Theta \vdash_{\textbf{P\L}_n} \varphi$ to denote that $\varphi$ is derivable in $\textbf{P\L}_n$ from a set of formulas $\Theta$.

A slight modification of the proof of Proposition 6.4.5 in \cite{cintula2011handbook} gives the following theorem.
\begin{theorem}\label{theorem1}
Let $\Theta$ be a set of formulas and $\varphi$ be a formula in $\mathcal{L}_n$. We have $$\Theta \Vdash_{\textbf{\L}_n} \varphi \text{ iff } \Theta \vdash_{\textbf{P\L}_n} \varphi
$$ 
\end{theorem}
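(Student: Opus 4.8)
The plan is to reduce Theorem~\ref{theorem1} to the completeness theorem for finitely-valued {\L}ukasiewicz logic with truth constants, because neither $\Vdash_{\textbf{\L}_n}$ nor $\vdash_{\textbf{P\L}_n}$ sees any internal structure of a modal formula $[\pi]\f$ or $\D{\pi}\f$: the valuations in the definition of $\Vdash_{\textbf{\L}_n}$ are merely non-modal homomorphisms, and the only rule of $\textbf{P\L}_n$ is MP, whose axioms are all non-modal. Concretely, I would let $\mathsf{At}$ consist of $\textbf{P}$ together with every formula of the shape $[\pi]\f$ and $\D{\pi}\f$, fix a bijection between $\mathsf{At}$ and a fresh set of propositional variables, and let $(\cdot)^{\flat}$ be the induced rewriting of each $\mathcal{L}_n$-formula into a formula of constant-enriched {\L}ukasiewicz logic over those variables. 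Every $\textbf{\L}_n$-valuation factors through $(\cdot)^{\flat}$, so $\Theta\Vdash_{\textbf{\L}_n}\f$ holds exactly when $\Theta^{\flat}\Vdash_{\textbf{\L}_n}\f^{\flat}$; and every axiom instance and every use of MP is preserved and reflected by $(\cdot)^{\flat}$, so $\Theta\vdash_{\textbf{P\L}_n}\f$ holds exactly when $\Theta^{\flat}\vdash_{\textbf{P\L}_n}\f^{\flat}$. It therefore suffices to prove the equivalence for the purely propositional fragment, which is the setting covered by Proposition~6.4.5 of \cite{cintula2011handbook}.

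For that fragment the direction $\Theta\vdash_{\textbf{P\L}_n}\f\Rightarrow\Theta\Vdash_{\textbf{\L}_n}\f$ is the routine soundness induction on the length of a derivation: the four {\L}ukasiewicz schemata evaluate to $1$ under every $\textbf{\L}_n$-valuation by the defining identities of the MV-chain, the book-keeping schema $\overline{c\ast d}\leftrightarrow\bar c\star\bar d$ evaluates to $1$ because $\nu(\bar c)=c$ and the operations were defined to match, the members of $\Theta$ have value $1$ by hypothesis, and MP preserves value $1$ since $a=1$ and $a\to b=1$ force $b=1$ in $\textbf{\L}_n$.

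The substantive direction is $\Theta\Vdash_{\textbf{\L}_n}\f\Rightarrow\Theta\vdash_{\textbf{P\L}_n}\f$, which I would prove by contraposition via the Lindenbaum--Tarski method. Assuming $\Theta\not\vdash_{\textbf{P\L}_n}\f$, form the algebra $\mathbf{A}_\Theta$ of $\mathcal{L}_n$-formulas modulo the relation defined by $\ff\equiv_\Theta\fff$ exactly when $\Theta\vdash_{\textbf{P\L}_n}\ff\leftrightarrow\fff$; the {\L}ukasiewicz axioms make $\mathbf{A}_\Theta$ an MV-algebra expanded by the constants $c\mapsto[\bar c]_\Theta$, and in it $[\ff]_\Theta=1$ for every $\ff\in\Theta$ while $[\f]_\Theta\neq 1$. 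By Chang's representation of MV-algebras as subdirect products of MV-chains, some projection $h\colon\mathbf{A}_\Theta\to\mathbf{C}$ onto an MV-chain satisfies $h([\f]_\Theta)\neq 1$ and still $h([\ff]_\Theta)=1$ for all $\ff\in\Theta$. The book-keeping schema makes $c\mapsto h([\bar c]_\Theta)$ a homomorphism $\textbf{\L}_n\to\mathbf{C}$, and it is injective because $\overline{c\leftrightarrow d}$, whose $\textbf{\L}_n$-value is $c\leftrightarrow d<1$ for $c\neq d$, is not a theorem; hence $|\mathbf{C}|\geq n$. Finally, $\mathbf{C}$ must be isomorphic to $\textbf{\L}_n$, not to a larger {\L}ukasiewicz chain: the finitely-valued {\L}ukasiewicz axioms of $\textbf{P\L}_n$ force $\mathbf{A}_\Theta$, and hence each of its chain quotients, into the variety generated by $\textbf{\L}_n$, whose subdirectly irreducible members are among the chains $\textbf{\L}_k$ with $k\leq n$, so $|\mathbf{C}|\geq n$ gives $\mathbf{C}\cong\textbf{\L}_n$. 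Composing $h$ with the quotient map then yields an $\textbf{\L}_n$-valuation $\nu$ with $\nu[\Theta]=\{1\}$ and $\nu(\f)\neq 1$, i.e.\ $\Theta\not\Vdash_{\textbf{\L}_n}\f$, which is the desired contrapositive.

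I expect the main obstacle to be precisely this last identification of $\mathbf{C}$ with $\textbf{\L}_n$ --- equivalently, verifying that $\textbf{P\L}_n$ proves enough to exclude every proper MV-chain extension of $\textbf{\L}_n$; this is the one point that goes beyond the textbook completeness of {\L}ukasiewicz logic, and it is where the ``slight modification'' of Proposition~6.4.5 in \cite{cintula2011handbook} is invoked. The remaining ingredients --- the reduction of the first paragraph, the soundness induction, and the Lindenbaum construction --- are routine.
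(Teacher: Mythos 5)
Your overall strategy --- treating the modal formulas as fresh atoms to reduce to the purely propositional fragment, a routine soundness induction, and a Lindenbaum--Tarski construction followed by a subdirect decomposition into chains --- is the standard route and is presumably what the cited Proposition~6.4.5 of \cite{cintula2011handbook} carries out; the paper itself offers nothing beyond that citation. The atomization step and the soundness direction are fine.

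The gap is exactly where you suspected it, and your attempt to close it does not go through as written. You appeal to ``the finitely-valued {\L}ukasiewicz axioms of $\textbf{P\L}_n$'' to force $\mathbf{A}_\Theta$ into the variety generated by $\textbf{\L}_n$, but $\textbf{P\L}_n$ as defined in the paper contains no such axioms: it consists of the four schemata of \emph{infinite}-valued {\L}ukasiewicz logic, the book-keeping schemata for the constants, and MP. All of these are sound with respect to the standard MV-algebra $[0,1]$ with $\bar c$ read as the rational $c$, so nothing confines $\mathbf{A}_\Theta$ to the variety generated by $\textbf{\L}_n$, and a nontrivial chain quotient $\mathbf{C}$ with $|\mathbf{C}|\geq n$ need not be isomorphic to $\textbf{\L}_n$ (it may, for instance, be a subalgebra of $[0,1]$ properly containing the constants). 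Concretely, $\bigvee_{c\in\textbf{\L}_n}(p\leftrightarrow\bar c)$ is valid under every $\textbf{\L}_n$-valuation but takes value below $1$ in $[0,1]$ when $p$ is irrational, hence is not derivable from the listed axioms; so the equivalence cannot be established for the system as given. To repair the argument you must either add the missing schemata (Grigolia's axioms for $n$-valued {\L}ukasiewicz logic, or e.g.\ the single schema $\bigvee_{c\in\textbf{\L}_n}(\f\leftrightarrow\bar c)$) and then show that every nontrivial chain quotient of $\mathbf{A}_\Theta$ coincides with the copy of $\textbf{\L}_n$ carved out by the constants, or note explicitly that the theorem requires this strengthening of $\textbf{P\L}_n$. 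A minor further point: injectivity of $c\mapsto h([\bar c]_\Theta)$ should be argued inside the quotient $\mathbf{C}$ (from $h([\bar e]_\Theta)=1$ and the book-keeping consequence $\bar e^{\,n-1}\leftrightarrow\bar 0$ one gets $h([\bar 0]_\Theta)=1$, a contradiction), not from the non-theoremhood of $\overline{c\leftrightarrow d}$, which constrains $\mathbf{A}_\Theta$ but not the particular projection $h$.
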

Extend $\textbf{P\L}_n$ with axiom schemata and rules for modal formulas, we get another Hilbert-style proof system called $\textbf{D\L}_n$.
\begin{definition}
For each $n \geq 2$, define $\textbf{D\L}_n$ to be the Hilbert-style proof system that extends $\textbf{P\L}_n$ with the following axiomatic schemata and rules :
\begin{itemize}
    \item $[\pi]\bar{1}$,
    \item $[\pi]\varphi \wedge [\pi]\psi \to [\pi](\varphi \wedge \psi)$,
    \item $[\pi](\bar{c}\to \varphi ) \leftrightarrow (\bar{c}\to [\pi] \varphi)$,
    \item $[\pi](\varphi \to \bar{c})\leftrightarrow (\langle \pi \rangle \varphi \to \bar{c})$,
    \item $[\pi_0 ;\pi_1 ]\varphi \leftrightarrow [\pi_0][\pi_1]\varphi$,
    \item $[\pi_0 \cup \pi_1] \varphi \leftrightarrow [\pi_0]\varphi \wedge [\pi_1]\varphi$,
    \item $[\pi_0 \cap \pi_1]\varphi \leftrightarrow (\langle \pi_0\rangle \bar{1} \to [\pi_1]\varphi) \wedge(\langle \pi_1\rangle  \bar{1} \to [\pi_1]\varphi )$,
    \item $[\pi^{*}]\varphi \rightarrow \varphi \wedge [\pi][\pi^{*}]\varphi$,
    \item $[\pi^{*}](\varphi \to [\pi]\varphi)\to (\varphi \to [\pi^{*}]\varphi)$
    \item $[\varphi?]\psi \leftrightarrow(\varphi \to \psi)$,
    \item $\langle \pi_0;\pi_1\rangle\varphi \leftrightarrow \langle \pi_0\rangle\langle \pi_1\rangle\varphi$,
    \item $\langle \pi_0 \cup \pi_1\rangle \varphi \leftrightarrow \langle \pi_0\rangle \varphi \vee \langle \pi_1\rangle \varphi$,
    \item $\langle \pi_0 \cap \pi_1\rangle \varphi \leftrightarrow \langle \pi_0\rangle \varphi \wedge \langle \pi_1\rangle \varphi$,
    \item $\varphi \vee \langle \pi\rangle\langle \pi^{*}\rangle \varphi \rightarrow\langle \pi^{*}\rangle \varphi$,
    \item $[\pi^*](\langle \pi \rangle\varphi\to \varphi)\to (\langle\pi^*\rangle\varphi\to \varphi)$,
    \item $\langle \varphi ?\rangle \psi \leftrightarrow (\varphi \wedge \psi)$,
    \item $[\pi]\bar{0} \vee \langle \pi\rangle \bar{1} $.
\end{itemize}
\end{definition}
The first four axioms and monotonicity rule are adopted from \cite{vidal2017finite} which characterizes the minimum many-valued bimodal logics over finite residuated lattices. The rest is the standard axiomatization of concurrent PDL \cite{goldblatt1992parallel}. We let $Thm_n$ to be the set of theorems of $\textbf{D\L}_n$, i.e. 
$$Thm_n= \{ \varphi \in \mathcal{L}_n\;|\;\vdash_{\textbf{D\L}_n} \varphi \}$$
\section{Canonical Model}
In this section, we introduce the canonical models and filtration technique for connecting finite models and typical models.
For $n \in \mathbb{N}$, $\mathcal{F}_n =\langle S^n,\mathcal{I}^n \rangle$ consists of the set $$S^n= \{ s : \mathcal{L}_n \to \textbf{\L}_n \;|\; s[Thm_n]=\{\bar{1}\}\}$$ and $$\mathcal{I}_n:\Pi_n\times S^n\times P(S^n) \cup \Phi_n\times S^n \to \textbf{\L}_n $$ such that $\mathcal{I}^n(\varphi,s)=s(\varphi)$ for all $s \in S^n$, $$
\mathcal{I}^n(\pi,s,T)= \bigwedge_{\varphi \in \Phi_n}\{(s([\pi]\varphi) \to\wedge_{t\in T}t(\varphi))\wedge(\wedge_{t\in T}t(\varphi) \to s(\dia{\pi} \varphi)) \}.$$ The {\it canonical model} in  is $\mathcal{M}^n_c= \langle S^n,E^n,V^n \rangle$ where $E^n(\pi)(s,T)=\mathcal{I}^n(\pi,s,T)$ and $V^n(p)(s)=s(p)$
\begin{definition}
A set of formulas $\Gamma$ of $\mathcal{L}_n$ is called {\it Fisher-Ladner clsoed} if 
\begin{itemize}
    \item $\Gamma$ is closed under subformulas,
    \item $[\pi_0 \cup\pi_1]\varphi \in \Gamma$ implies $[\pi_0]\varphi,[\pi_1]\varphi \in \Gamma$,
    \item $[\pi_0\cap \pi_1]\varphi\in \Gamma$ implies $[\pi_0]\varphi,[\pi_1]\varphi,[\pi_0]\bar{1},[\pi_1]\bar{1} \in \Gamma$,
    \item $[\pi_0;\pi_1]\varphi \in \Gamma$ implies $[\pi_0][\pi_1]\varphi\in \Gamma$,
    \item $[\pi^{*}]\varphi \in \Gamma$ implies $[\pi][\pi^*]\varphi \in \Gamma$,
    \item $[\psi?]\varphi \in \Gamma$ implies $\psi \to \varphi \in \Gamma$,
    \item$\langle\pi_0 \cup\pi_1\rangle\varphi \in \Gamma$ implies $\langle \pi_0\rangle\varphi,\langle \pi_1 \rangle \varphi \in \Gamma$,
    \item  $\langle \pi_0\cap \pi_1\rangle \varphi\in \Gamma$ implies $\langle\pi_0\rangle\varphi,\langle\pi_1\rangle\varphi \in \Gamma$,
    \item $\langle \pi_0;\pi_1 \rangle \varphi \in \Gamma$ implies $\langle \pi_0\rangle\langle \pi_1\rangle \varphi\in \Gamma$,
    \item $\langle\pi^{*}\rangle\varphi \in \Gamma$ implies $\langle \pi\rangle\langle\pi^*\rangle\varphi \in \Gamma$,
    \item $\dia{\psi?}\varphi \in \Gamma$ implies $\psi \wedge \varphi \in \Gamma$
\end{itemize}
\end{definition}
The closure of a set $\Gamma$ of formulas is the smallest closed set containing $\Gamma$ as a subset. Also, we write $FL(\varphi)$ as the closure of the set $\{\varphi \}$. Given any $S^n$ and $s,t \in S^n$, we define a relation $\sim_{\Gamma}$ with respect to a $\Gamma$ as follows : $$s\sim_{\Gamma}t \iff \forall \varphi \in \Gamma (s(\varphi)=t (\varphi)) .$$ The equivalence class of $s$ under $\sim_{\Gamma}$ is defined as $|s|_{\Gamma}= \{t \in S^n\;|\; s\sim_{\Gamma} t \}$.
\begin{definition}
Let $\Gamma$ be a finite closed subset of $\mathcal{L}_n$. The filtration of $\mathcal{M}_c^n$ through $\Gamma$ is $\mathcal{M}_{c,\Gamma}^n= \langle S^n_{\Gamma},E^n_{\Gamma},V^n_{\Gamma}\rangle$ where $S^n_{\Gamma}=\{|s|_{\Gamma}\;|\;s \in S^n \}$,$|T|_{\Gamma}=\{|t|_{\Gamma}\;|\;t \in T \}$ for $T \subseteq S^n$, $E^n_{\Gamma}(\pi)(|s|_{\Gamma},|T|_{\Gamma})$  is $$\bigwedge_{\varphi \in \Phi_n}\{(s([\pi]\varphi) \to\wedge_{t\in T}t(\varphi))\wedge(\wedge_{t\in T}t(\varphi) \to s(\dia{\pi}\varphi))\;|\;[\pi]\varphi,\dia{\pi}\varphi\in \Gamma \}$$ $V^n_{\Gamma}(p)(|s_{\Gamma}|)=\begin{cases}
    V^n(p)(s) & \text{if } p\in\Gamma\\
    0 & \text{otherwise.} \\
    \end{cases}$
\end{definition}
\section{Soundness and Completeness}
In this section, we demonstrate that for each $n >1 $, the proof system $\textbf{D\L}_n$ is sound and complete with respect to the filtraion models.
To prove completeness, we need to show that for all $\varphi$ and all $s\in S^n$, $\mathcal{I}^n(\varphi,s)=\mathcal{I}^n_{\Gamma}(\varphi,|s|_{\Gamma})$. To achieve this goal, we define $\Pi_{\Gamma,n}$ as the smallest set of program commands that includes all atomic programs, test occurring in members of $\Gamma$, and is closed under program operations $;,\cup,\cap,$ and $*$. Then we define $\mathcal{I}^n_{\Gamma}: \Pi_{\Gamma,n}\times S^n_{\Gamma}\times P(S^n)\cup \Gamma\times S^n_{\Gamma}\to \text{\L}_n$ as for $\textbf{\L}_n$-models. 
\begin{lemma}
For all $n \in \omega$, $\pi \in \Pi_n$, and all $s \in S_n$, we have the following identities:
\begin{enumerate}
    \item For all $\varphi \in \Phi_n$, $s([\pi]\varphi)=\bigwedge_{T\subseteq S^n}\{\mathcal{I}^n(\pi,s,T) \to \wedge_{t \in T}t(\varphi)\}$, $s(\dia{\pi}\varphi)=\bigvee_{T \subseteq S^n}\{ \mathcal{I}^n(\pi,s,T)\odot\ \wedge_{t \in T}t(\varphi) \}$,
    \item For all $T \subseteq S^n$, $\mathcal{I}^n(\pi,s,T)= \bigwedge_{\psi \in \Phi_n}\{\wedge_{t\in T}t(\psi)\;|\;s([\pi]\psi)=1 \mbox{ and }s(\dia{\pi}\psi)=1 \} $.
\end{enumerate}
\end{lemma}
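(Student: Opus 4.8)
The plan is to split each of the two equations in (1) and the equation in (2) into an easy and a hard inequality: the easy ones fall straight out of the definition of $\mathcal{I}^n$ and residuation, and the hard ones are handled by a canonical-style witness extracted from $S^n$ via Theorem~\ref{theorem1}. For the easy part, for every $T\subseteq S^n$ and $\varphi$ the conjunct of the defining meet of $\mathcal{I}^n(\pi,s,T)$ indexed by $\varphi$ gives $\mathcal{I}^n(\pi,s,T)\le\bigl(s([\pi]\varphi)\to\bigwedge_{t\in T}t(\varphi)\bigr)\wedge\bigl(\bigwedge_{t\in T}t(\varphi)\to s(\dia{\pi}\varphi)\bigr)$; residuation and commutativity of $\odot$ then yield $s([\pi]\varphi)\le\bigwedge_{T}\bigl(\mathcal{I}^n(\pi,s,T)\to\bigwedge_{t\in T}t(\varphi)\bigr)$, $\bigvee_{T}\bigl(\mathcal{I}^n(\pi,s,T)\odot\bigwedge_{t\in T}t(\varphi)\bigr)\le s(\dia{\pi}\varphi)$, and, whenever $s([\pi]\psi)=1$, $\mathcal{I}^n(\pi,s,T)\le\bigwedge_{t\in T}t(\psi)$ --- these are the easy halves of (1) and (2). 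For the remaining half of (2) I would fix $\varphi$, write $c=s([\pi]\varphi)$ and $d=\bigwedge_{t\in T}t(\varphi)$, and show, using the constant-modality axioms $[\pi](\bar c\to\varphi)\leftrightarrow(\bar c\to[\pi]\varphi)$ and $[\pi](\varphi\to\bar c)\leftrightarrow(\dia{\pi}\varphi\to\bar c)$ together with the derivable facts that $\dia{\pi}$ is monotone, $[\pi]\neg\theta\leftrightarrow\neg\dia{\pi}\theta$, and $\dia{\pi}(\bar d\odot\theta)\leftrightarrow\bar d\odot\dia{\pi}\theta$, that each defining conjunct of $\mathcal{I}^n(\pi,s,T)$ is bounded below by the meet on the right of (2): the formula $\bar c\to\varphi$ satisfies $s([\pi](\bar c\to\varphi))=1$ and $\bigwedge_{t\in T}t(\bar c\to\varphi)=c\to\bigwedge_{t\in T}t(\varphi)$, handling the first conjunct, while $\bigwedge_{t\in T}t(\bar d\to\varphi)=1$ and $\bar d\odot(\bar d\to\varphi)=\bar d\wedge\varphi\le\varphi$ turn $s(\dia{\pi}(\bar d\to\varphi))$ into a bound on the second.

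For the hard directions of (1) I would use $\Delta_s=\{\chi\in\Phi_n\mid s([\pi]\chi)=1\}$, the set of ``$s$-boxed'' formulas, which contains all theorems of $\textbf{D\L}_n$ and is closed under $\wedge$ and under $[\pi]$-monotonicity. The key claim is that $\Delta_s$ is in fact closed under $\vdash_{\textbf{P\L}_n}$. Granting it, there are two cases. If $\bar{0}\in\Delta_s$, i.e.\ $s([\pi]\bar{0})=1$, then monotonicity forces $s([\pi]\varphi)=1$ for all $\varphi$, the axiom $[\pi](\bar{1}\to\bar{0})\leftrightarrow(\dia{\pi}\bar{1}\to\bar{0})$ forces $s(\dia{\pi}\bar{1})=0$ hence $s(\dia{\pi}\varphi)=0$, and the $\bar{0}$-conjunct makes $\mathcal{I}^n(\pi,s,T)=0$ for every $T$, so both equations in (1) hold trivially. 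Otherwise $\Delta_s$ is consistent; by Theorem~\ref{theorem1} it has extensions to worlds $t\in S^n$, and for \emph{any} such $t$ the two constant axioms place $\overline{s([\pi]\psi)}\to\psi$ and $\psi\to\overline{s(\dia{\pi}\psi)}$ in $\Delta_s$ for every $\psi$, so $s([\pi]\psi)\le t(\psi)\le s(\dia{\pi}\psi)$ and hence $\mathcal{I}^n(\pi,s,\{t\})=1$. Using Theorem~\ref{theorem1}, the closure of $\Delta_s$ and the constant axiom for $[\pi]$ (resp.\ for $\dia{\pi}$) once more, and finiteness of the chain, the minimum of $t(\varphi)$ over all such $t$ is attained and equals $s([\pi]\varphi)$ --- bounded below because $\overline{s([\pi]\varphi)}\to\varphi\in\Delta_s$, above by deductive closure --- and symmetrically the maximum is attained and equals $s(\dia{\pi}\varphi)$. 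Picking $t_0$ with $t_0(\varphi)=s([\pi]\varphi)$ gives $\mathcal{I}^n(\pi,s,\{t_0\})\to t_0(\varphi)=s([\pi]\varphi)$, the inequality missing from the box equation, and picking $t_1$ with $t_1(\varphi)=s(\dia{\pi}\varphi)$ gives $\mathcal{I}^n(\pi,s,\{t_1\})\odot t_1(\varphi)=s(\dia{\pi}\varphi)$, the one missing from the diamond equation. Nothing in this uses the shape of $\pi$, so the argument is uniform over $\pi\in\Pi_n$.

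The delicate step is the claim that $\Delta_s$ is closed under $\vdash_{\textbf{P\L}_n}$ --- the many-valued, concurrent counterpart of the Existence Lemma for $\mathsf{PDL}$. Via the conjunction axiom one reduces a finite set of $s$-boxed premises to a single $s$-boxed $\delta$, and then {\L}ukasiewicz's local deduction theorem on the $(n-1)$-element chain reduces everything to: $s([\pi]\delta)=1$ implies $s([\pi]\delta^{\,n-1})=1$, where $\delta^{\,n-1}$ is the ``crisping'' of $\delta$ supplied by the deduction theorem. Controlling how $[\pi]$ interacts with this crisping operation, using the modal axioms imported from \cite{vidal2017finite}, is the part I expect to require genuine care; once it is in place, the case split, the identity $\mathcal{I}^n(\pi,s,\{t\})=1$, and the pinning of the extremal values through Theorem~\ref{theorem1} are all routine.
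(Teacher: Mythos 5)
The paper states this lemma without any proof, so there is nothing to compare your argument against; I can only assess it on its own terms. The easy halves are indeed immediate from residuation, and your degenerate/non-degenerate case split for part (1) is the right shape. But the load-bearing step --- that $\Delta_s=\{\chi\mid s([\pi]\chi)=1\}$ is closed under $\vdash_{\textbf{P\L}_n}$ --- is not merely ``a part requiring genuine care''; it is where the entire content of the lemma lives, and your own reduction shows why it does not close. After collapsing finitely many premises with $[\pi]\varphi\wedge[\pi]\psi\to[\pi](\varphi\wedge\psi)$ and applying the local deduction theorem, you are left with exactly ``$s([\pi]\delta)=1$ implies $s([\pi]\delta^{n-1})=1$''. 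Since $\delta\to\delta^{n-1}$ is not an $\textbf{\L}_n$-tautology (take $\delta$ with value $\tfrac{1}{2}$ in $\textbf{\L}_3$), monotonicity cannot supply this; what is needed is a multiplicative principle such as $[\pi]\varphi\odot[\pi]\psi\to[\pi](\varphi\odot\psi)$, which is not among the listed axioms and whose derivability from them you would have to establish. Until then the witness worlds $t_0,t_1$, and hence both hard inequalities of (1), are unsupported.

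Part (2) has a separate, concrete problem. The index set of the meet on the right requires \emph{both} $s([\pi]\psi)=1$ and $s(\dia{\pi}\psi)=1$. Your witness $\bar c\to\varphi$ (with $c=s([\pi]\varphi)$) satisfies the first condition via the constant axiom, but you never verify $s(\dia{\pi}(\bar c\to\varphi))=1$, and this does not follow from diamond-monotonicity; it would need a box--diamond interaction principle (something like $\dia{\pi}\bar 1\wedge[\pi]\theta\to\dia{\pi}\theta$) that is not among the axioms either. Worse, in your own degenerate case $s([\pi]\bar 0)=1$ the axiom $[\pi](\varphi\to\bar c)\leftrightarrow(\dia{\pi}\varphi\to\bar c)$ forces $s(\dia{\pi}\bar 1)=0$ and hence $s(\dia{\pi}\psi)=0$ for every $\psi$, so the right-hand meet of (2) is empty and equals $1$, while the $\bar 1$-conjunct of $\mathcal{I}^n(\pi,s,T)$ equals $1\to s(\dia{\pi}\bar 1)=0$. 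So unless $\dia{\pi}\bar 1$ is derivable (it is not; only the disjunction $[\pi]\bar 0\vee\dia{\pi}\bar 1$ is an axiom), identity (2) is outright false in that case and no proof strategy can succeed without a repair or an explicit side condition. You should surface this rather than confine the case analysis to part (1).
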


\begin{lemma}\label{lemma3}
$s(\varphi)=1$ for all $s \in S^n$ iff $\varphi \in theorem_n$.
\end{lemma}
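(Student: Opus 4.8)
The plan is to split the equivalence into its two halves: the direction $\varphi\in theorem_n\Rightarrow s(\varphi)=\bar 1$ for all $s\in S^n$ is immediate from the definition of $S^n$, and the converse is obtained, via its contrapositive, from Theorem~\ref{theorem1} (strong completeness of $\textbf{P\L}_n$ with respect to $\textbf{\L}_n$-valuations). Recall that $theorem_n$ is the set $Thm_n$ of $\textbf{D\L}_n$-theorems, and that the members of $S^n$ are precisely the $\textbf{\L}_n$-valuations $s$ with $s[Thm_n]=\{\bar 1\}$; being non-modal homomorphisms, such valuations treat each formula $[\pi]\psi$ and $\dia{\pi}\psi$ as a propositional atom, which is exactly the setting in which Theorem~\ref{theorem1} is stated. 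For the easy direction there is nothing to prove: $\varphi\in Thm_n$ together with the defining condition $s[Thm_n]=\{\bar 1\}$ for membership in $S^n$ gives $s(\varphi)=\bar 1$ for every $s\in S^n$.

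For the converse I would argue by contraposition, so assume $\varphi\notin Thm_n$ and seek some $s\in S^n$ with $s(\varphi)\neq\bar 1$. The one step needing an argument is the claim that, for every $\psi$, $Thm_n\vdash_{\textbf{P\L}_n}\psi$ holds iff $\psi\in Thm_n$: right-to-left is the one-line derivation consisting of $\psi$ itself, and left-to-right follows by induction on the length of a $\textbf{P\L}_n$-derivation of $\psi$ from $Thm_n$, since every line of such a derivation is a $\textbf{P\L}_n$-axiom instance (hence a $\textbf{D\L}_n$-theorem, as $\textbf{D\L}_n$ extends $\textbf{P\L}_n$), a member of $Thm_n$, or an MP-consequence of earlier lines, and $Thm_n$ is closed under MP. Instantiating with $\psi=\varphi$ yields $Thm_n\not\vdash_{\textbf{P\L}_n}\varphi$, whence $Thm_n\not\Vdash_{\textbf{\L}_n}\varphi$ by Theorem~\ref{theorem1}; unpacking this gives an $\textbf{\L}_n$-valuation $\nu$ with $\nu[Thm_n]=\{\bar 1\}$ and $\nu(\varphi)\neq\bar 1$. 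By definition $\nu\in S^n$, and it is the required witness.

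I do not expect a genuine obstacle here; the delicate points are organizational. The one to watch is the identification of $\textbf{D\L}_n$-theoremhood with $\textbf{P\L}_n$-derivability from the theory $Thm_n$: it works because $Thm_n$ already contains every modal axiom of $\textbf{D\L}_n$ and is closed under all of its rules (MP, and any monotonicity-type rule present), so a $\textbf{P\L}_n$-derivation over $Thm_n$ cannot escape $Thm_n$, while conversely $\textbf{P\L}_n\subseteq\textbf{D\L}_n$ makes the other inclusion trivial; note that the argument only ever runs $\textbf{P\L}_n$-derivations, so the extra rules of $\textbf{D\L}_n$ are never invoked. It is also worth recording, so that the statement is not vacuous, that $S^n\neq\emptyset$ — equivalently, that $\textbf{D\L}_n$ is consistent — which follows from the soundness of $\textbf{D\L}_n$ over $\textbf{\L}_n$-models (for instance $\bar 0\notin Thm_n$, since $\mathcal{I}_M(\bar 0,s)=0$ in every model).
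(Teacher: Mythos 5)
Your proof is correct and follows essentially the same route as the paper: the easy direction from the definition of $S^n$, and the converse by producing a countervaluation via Theorem~\ref{theorem1}. Your explicit justification that $\varphi\notin Thm_n$ implies $Thm_n\nvdash_{\textbf{P\L}_n}\varphi$ (needed because Theorem~\ref{theorem1} concerns $\textbf{P\L}_n$-derivability, whereas the paper passes directly through $\nvdash_{\textbf{D\L}_n}$) is a detail the paper glosses over, and including it is an improvement rather than a divergence.
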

\begin{proof}
The if direction holds by the definition of $S^n$. Suppose that $\nvdash_{\textbf{D\L}_n}\varphi$, then $theorem_n\nvdash_{\textbf{D\L}_n} \varphi$. From Theorem \ref{theorem1}, there exists a $\textbf{\L}_n$-homomorphism $h$ such that $h[theorem_n]\subseteq \{1\}$ but $h(\varphi)<1$ which is a contradiction.
\end{proof}
\begin{lemma}\label{lemma4}
For all $n\in \omega$ and all finite closed $\Gamma\subseteq \Phi_n$ 
\begin{enumerate}
    \item $\mathcal{I}^n(\pi,s,T) \leq \mathcal{I}^n_{\Gamma}(\pi,|s|_{\Gamma},|T|_{\Gamma})$,
    \item If $[\pi]\varphi\in\Gamma$, then $s([\pi]\varphi)\leq \mathcal{I}^n_{\Gamma}(\pi,|s|_{\Gamma},|T|_{\Gamma})\to \wedge_{t \in T}t(\varphi)$,
    \item If $\dia{\pi}\varphi \in \Gamma$, then $s(\dia{\pi}\varphi)\leq \mathcal{I}^n_{\Gamma}(\pi,|s|_{\Gamma},|T|_{\Gamma})\odot \wedge_{t \in T}t(\varphi)$
\end{enumerate}
\end{lemma}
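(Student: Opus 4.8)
The plan is to prove the three clauses simultaneously, by induction on the structure of $\pi$, only mildly entangled with the filtration identity $\mathcal{I}^n(\psi,s)=\mathcal{I}^n_\Gamma(\psi,|s|_\Gamma)$ for $\psi\in\Gamma$: the latter is needed only at the test case $\pi=\psi?$ (for the strictly simpler test formula $\psi\in\Gamma$) and reduces, by a routine sub-induction on $\psi$ --- trivial on non-modal connectives since $V^n_\Gamma$ agrees with $V^n$ on $\Gamma$-variables --- to the inductive hypothesis of the present lemma for the programs inside $\psi$, so a single well-founded induction carries everything. The tools will be the monotonicity and join-distributivity of $\cup,\circ,\otimes$ recorded above, the closure conditions on $\Gamma$, Theorem~\ref{theorem1}, and elementary \L ukasiewicz arithmetic: residuation, $a\odot b\le a\wedge b$, $a\to(b\wedge c)=(a\to b)\wedge(a\to c)$, $(a\vee b)\to c=(a\to c)\wedge(b\to c)$, and distributivity of $\odot$ over arbitrary joins.

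For $\pi=\bar\pi$ everything is immediate: $\mathcal{I}^n_\Gamma(\bar\pi,|s|_\Gamma,|T|_\Gamma)=E^n_\Gamma(\bar\pi)(|s|_\Gamma,|T|_\Gamma)$ is the meet defining $\mathcal{I}^n(\bar\pi,s,T)$ with all conjuncts indexed outside $\Gamma$ deleted, so (1) holds (and it is well defined on $\sim_\Gamma$-classes by subformula-closure), and (2) (resp.\ (3)) follows by one residuation step because the conjunct indexed by $\varphi$ survives whenever $[\bar\pi]\varphi\in\Gamma$ (resp.\ $\dia{\bar\pi}\varphi\in\Gamma$). For a test $\pi=\psi?$ occurring in $\Gamma$ one has $\psi\in\Gamma$; both interpretations vanish unless the target is a singleton, and in the surviving case $T=\{s\}$ both equal $s(\psi)$, so (1)--(3) drop out of the test axioms, Theorem~\ref{theorem1}, and the filtration identity for $\psi$. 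For $\pi=\pi_0\cup\pi_1$ one rewrites by $[\pi_0\cup\pi_1]\varphi\leftrightarrow[\pi_0]\varphi\wedge[\pi_1]\varphi$ (and its diamond form), uses $(a\vee b)\to c=(a\to c)\wedge(b\to c)$ together with the definition of $\cup$ on reachable relations, and invokes the hypotheses for $\pi_0,\pi_1$; this is routine.

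For the multiplicative constructs $\pi_0;\pi_1$ and $\pi_0\cap\pi_1$ I would first rewrite the modality by the matching $\textbf{D\L}_n$-axiom and unfold $\mathcal{I}^n_\Gamma(\pi,-,-)$ via $\circ$ (resp.\ $\otimes$); the decisive move for (2) and (3) --- which sidesteps the non-idempotence of $\odot$ --- is to argue one element at a time after distributing $\odot$ over the defining join. For (2), fix $t\in T$: its class lies in a single block $T_{v_0}$ of the splitting $|T|_\Gamma=\bigcup_v T_v$ under consideration, so every factor of the $\bigodot$-product except the one at $v_0$ may be discarded (this only weakens the required inequality), after which a double use of hypothesis (2) --- for $\pi_0$ applied to $[\pi_1]\varphi\in\Gamma$, then for $\pi_1$ applied to $\varphi$ --- bounds $s([\pi_0;\pi_1]\varphi)\odot\mathcal{I}^n_\Gamma(\pi_0;\pi_1,|s|_\Gamma,|T|_\Gamma)$ above by $t(\varphi)$; taking the meet over $t$ gives (2), and (3) is dual, fixing instead a block of the witnessing set (with the closure of $\Gamma$ supplying the auxiliary subformulas demanded by the $\cap$-axiom). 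The case $\pi^*$ I would handle by an inner induction on the approximants $R^{(m)}$: using $R^{(m+1)}=\iota\cup(R\circ R^{(m)})$, the already-settled $\cup$- and $\circ$-cases, the unfolding and induction axioms for $\pi^*$, and the closure clause $[\pi^*]\varphi\in\Gamma\Rightarrow[\pi][\pi^*]\varphi\in\Gamma$, one proves the $R^{(m)}$-analogues of (1)--(3), then passes to $R^*=\bigvee_m R^{(m)}$ using $R^{(m)}\le R^{(m+1)}$.

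The hard part will be (1) for sequential composition, and hence for $\pi^*$. There $\mathcal{I}^n(\pi_0;\pi_1,s,T)$ is a join of products $\mathcal{I}^n(\pi_0,s,U)\odot\bigodot_{u\in U}\mathcal{I}^n(\pi_1,u,T_u)$ over splittings $T=\bigcup_{u\in U}T_u$, and to dominate such a product by a summand of $\mathcal{I}^n_\Gamma(\pi_0;\pi_1,|s|_\Gamma,|T|_\Gamma)$ one has to take the block set $|U|_\Gamma$ and redistribute the images $|T_u|_\Gamma$ over the classes of $|U|_\Gamma$ so as to form a \emph{legitimate} splitting of $|T|_\Gamma$ while not losing mass when several states $u$ collapse to one class --- the step that collides both with the non-idempotence of $\odot$ and with the non-monotonicity of reachable $\textbf{\L}_n$-relations in their set argument. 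The room to do this comes from the fact that $\circ$ is a join over \emph{all} splittings of its target; making it precise needs a careful recombination argument showing that, for the relations actually produced by the canonical filtration, the $\bigodot$-product over one collapsed block can be absorbed into a single reachability value without loss. Once (1) is in place, the rest is bookkeeping with $\sim_\Gamma$-classes and MV-arithmetic.
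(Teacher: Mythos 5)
The paper itself offers no proof of this lemma, so there is nothing to compare your argument against; judged on its own terms, your proposal has the right overall architecture (simultaneous induction on $\pi$, entangled with the filtration identity only at tests, with the atomic, test and $\cup$ cases handled essentially correctly), but it stops exactly short of the one step that carries the mathematical content. In clause (1) for $\pi_0;\pi_1$ you must dominate $R_{\pi_0}(s,U)\odot\bigodot_{u\in U}R_{\pi_1}(u,T_u)$ by a summand of $\mathcal{I}^n_{\Gamma}(\pi_0;\pi_1,|s|_{\Gamma},|T|_{\Gamma})$, and when several $u_1,\dots,u_k\in U$ collapse to one class $|u|_{\Gamma}$ the natural witness replaces the $k$ factors $R_{\pi_1}(u_j,T_{u_j})$ by the single factor $\mathcal{I}^n_{\Gamma}(\pi_1,|u|_{\Gamma},\bigcup_j|T_{u_j}|_{\Gamma})$. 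You correctly observe that neither the inductive hypothesis (which speaks only of $\mathcal{I}^n_{\Gamma}(\pi_1,|u_j|_{\Gamma},|T_{u_j}|_{\Gamma})$, a different target set) nor any monotonicity in the set argument yields the required bound --- and you then defer to ``a careful recombination argument'' for ``the relations actually produced by the canonical filtration''. That deferral is the gap. The absorption inequality $\bigodot_j R_{\pi_1}(u_j,T_{u_j})\le\mathcal{I}^n_{\Gamma}(\pi_1,|u|_{\Gamma},\bigcup_j|T_{u_j}|_{\Gamma})$ is checkable when $\pi_1$ is atomic, because $E^n_{\Gamma}$ is a meet of residuals $u([\bar{\pi}]\varphi)\to\wedge_t t(\varphi)$ and $\wedge_t t(\varphi)\to u(\dia{\bar{\pi}}\varphi)$, and each conjunct for the merged target is already bounded below by a single factor of the product; but for compound $\pi_1$ the filtered relation is rebuilt from the operations $\circ,\otimes,{}^{*}$, the meet-of-residuals description is lost, and the inequality is not an instance of anything you have proved. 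Closing it requires strengthening the induction hypothesis itself (e.g.\ to a multi-source form of (1) quantifying over finite families $u_j\sim_{\Gamma}u$ with their targets), and the same strengthening is then needed inside the inner induction for $\pi^{*}$ and in the $\otimes$-case for $\cap$.

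Two further points you should not wave away. First, your claim that clause (3) for composition ``is dual'' to clause (2) hides a real asymmetry: unfolding $s(\dia{\pi_0}\dia{\pi_1}\varphi)$ through the canonical-model identities produces a \emph{meet} $\bigwedge_{u\in U}u(\dia{\pi_1}\varphi)$ over the intermediate set, whereas the summand of $(R_{\pi_0}\circ R_{\pi_1})$ you must reach carries a \emph{fusion} $\bigodot_{u}$; since $\bigodot\le\bigwedge$ and $\odot$ is not idempotent, the bound runs in the wrong direction and needs its own argument, not duality. Second, clause (3) as literally stated (universally in $T$) is false --- take any $T$ with $\mathcal{I}^n_{\Gamma}(\pi,|s|_{\Gamma},|T|_{\Gamma})=0$ while $s(\dia{\pi}\varphi)>0$ --- so the existential reading you silently adopt (``the witnessing set'') is the only tenable one, and it should be made explicit before you rely on it in the $;$ and $*$ cases.
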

\begin{theorem}\label{theorem2}
For all $n\in \omega$ and all finite closed $\Gamma\subseteq \Phi_n$, if $\varphi \in \Gamma$, then $\mathcal{I}^n(\varphi,s)=\mathcal{I}^n_{\Gamma}(\varphi,|s|_{\Gamma})$.
\end{theorem}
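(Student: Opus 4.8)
The plan is to prove the identity $\mathcal{I}^n(\varphi,s)=\mathcal{I}^n_\Gamma(\varphi,|s|_\Gamma)$ by induction on the complexity of $\varphi$, uniformly in $s\in S^n$. The Fisher--Ladner closure of $\Gamma$ guarantees that whenever the formula under consideration lies in $\Gamma$ its relevant immediate constituents do too, so the induction hypothesis always applies; the program-level content is entirely outsourced to Lemma~\ref{lemma4} (the filtration conditions) and to the canonical truth conditions $s([\pi]\varphi)=\bigwedge_{T\subseteq S^n}\bigl(\mathcal{I}^n(\pi,s,T)\to\bigwedge_{t\in T}t(\varphi)\bigr)$ and $s(\langle\pi\rangle\varphi)=\bigvee_{T\subseteq S^n}\bigl(\mathcal{I}^n(\pi,s,T)\odot\bigwedge_{t\in T}t(\varphi)\bigr)$ established above, so no separate induction on programs is carried out here. (If one did not want to invoke Lemma~\ref{lemma4}, the natural alternative is a simultaneous induction on a Fisher--Ladner degree that makes $[\pi][\pi^{*}]\varphi$ precede $[\pi^{*}]\varphi$; but that is not needed.)

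The base and Boolean cases are routine: for $\varphi=\bar c$ both sides equal $c$; for $\varphi=p\in\Gamma$, $\mathcal{I}^n(p,s)=V^n(p)(s)=V^n_\Gamma(p)(|s|_\Gamma)=\mathcal{I}^n_\Gamma(p,|s|_\Gamma)$ by the definition of $V^n_\Gamma$; and for $\varphi=\varphi_0\star\varphi_1\in\Gamma$ subformula closure gives $\varphi_0,\varphi_1\in\Gamma$, so, as both interpretations are $\textbf{\L}_n$-homomorphisms on the non-modal connectives, the claim follows from the induction hypothesis. Two facts will be used throughout the modal cases: (i) since $\varphi\in\Gamma$, states in the same $\sim_\Gamma$-class agree on $\varphi$, so after applying the induction hypothesis $\bigwedge_{t\in T}t(\varphi)=\bigwedge_{|t|_\Gamma\in|T|_\Gamma}\mathcal{I}^n_\Gamma(\varphi,|t|_\Gamma)$; and (ii) every $X\subseteq S^n_\Gamma$ is $|T|_\Gamma$ for $T=\bigcup X$, so meets and joins indexed by subsets of $S^n_\Gamma$ coincide with those indexed by subsets $T\subseteq S^n$.

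For $[\pi]\varphi\in\Gamma$ (so $\varphi\in\Gamma$ and $\pi\in\Pi_{\Gamma,n}$), expand both sides by the box clause: $\mathcal{I}^n([\pi]\varphi,s)=\bigwedge_{T\subseteq S^n}\bigl(\mathcal{I}^n(\pi,s,T)\to\bigwedge_{t\in T}t(\varphi)\bigr)$ and $\mathcal{I}^n_\Gamma([\pi]\varphi,|s|_\Gamma)=\bigwedge_{X\subseteq S^n_\Gamma}\bigl(\mathcal{I}^n_\Gamma(\pi,|s|_\Gamma,X)\to\bigwedge_{|t|_\Gamma\in X}\mathcal{I}^n_\Gamma(\varphi,|t|_\Gamma)\bigr)$. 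For ``$\le$'', fix $X=|T|_\Gamma$: Lemma~\ref{lemma4}(2) gives $s([\pi]\varphi)\le\mathcal{I}^n_\Gamma(\pi,|s|_\Gamma,|T|_\Gamma)\to\bigwedge_{t\in T}t(\varphi)$, which by (i) is the $X$-th conjunct on the right, and meeting over all $X$ finishes. For ``$\ge$'', fix $T\subseteq S^n$: from the right-hand side $\mathcal{I}^n_\Gamma([\pi]\varphi,|s|_\Gamma)\le\mathcal{I}^n_\Gamma(\pi,|s|_\Gamma,|T|_\Gamma)\to\bigwedge_{t\in T}t(\varphi)$, hence by residuation $\mathcal{I}^n_\Gamma([\pi]\varphi,|s|_\Gamma)\odot\mathcal{I}^n_\Gamma(\pi,|s|_\Gamma,|T|_\Gamma)\le\bigwedge_{t\in T}t(\varphi)$; Lemma~\ref{lemma4}(1) with monotonicity of $\odot$ replaces $\mathcal{I}^n_\Gamma(\pi,|s|_\Gamma,|T|_\Gamma)$ by $\mathcal{I}^n(\pi,s,T)$, and residuating back and meeting over $T$ gives $\mathcal{I}^n_\Gamma([\pi]\varphi,|s|_\Gamma)\le s([\pi]\varphi)$. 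The case $\langle\pi\rangle\varphi\in\Gamma$ is the mirror image: expand by the diamond clause, use Lemma~\ref{lemma4}(1) with monotonicity of $\odot$ for one inequality and Lemma~\ref{lemma4}(3) for the other, exploiting finiteness of $S^n_\Gamma$ to realize the join defining $\mathcal{I}^n_\Gamma(\langle\pi\rangle\varphi,|s|_\Gamma)$ as a maximum attained at some $|T_0|_\Gamma$.

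The step I expect to cost the most is the diamond case, and within it the iteration $\pi^{*}$: the canonical value $\mathcal{I}^n(\pi^{*},s,T)$ is an infinite join of finite compositions of $R_\pi$ over the (generally infinite) set $S^n$, whereas $\mathcal{I}^n_\Gamma(\pi^{*},|s|_\Gamma,|T|_\Gamma)$ is computed inside the finite $S^n_\Gamma$, and matching the two is exactly what Lemma~\ref{lemma4} supplies --- its proof for $\pi^{*}$ resting in turn on the $[\pi^{*}]$- and $\langle\pi^{*}\rangle$-induction axioms and the corresponding Fisher--Ladner closure clauses --- together with the finiteness of the filtration. Granting Lemma~\ref{lemma4}, all that remains is the residuated-lattice bookkeeping above, which is routine in any finite MV-chain.
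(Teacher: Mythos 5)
Your proposal is correct and follows exactly the route the paper takes: the paper's entire proof reads ``By induction on the complexity of formulas $\varphi$ and using Lemma~\ref{lemma4},'' and your write-up is precisely that induction with the details filled in (base and Boolean cases via the homomorphism property, modal cases via the canonical truth conditions together with parts (1)--(3) of Lemma~\ref{lemma4}). No divergence to report.
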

\begin{proof}
By induction on the complexity of formulas $\varphi$ and using Lemma\ref{lemma4}.
\end{proof}
Now, we are ready to the soundness and completeness theorem.
\begin{theorem}
For all $n >1$ and $\varphi\in \mathcal{L}_n$, $$\Vdash_{\textbf{\L}_n}\varphi \mbox{ iff } \vdash_{\textbf{D\L}_n} \varphi$$
\end{theorem}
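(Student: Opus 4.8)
We prove the two implications separately, the first by induction on derivations and the second by the canonical model together with the filtration developed above.

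\textbf{Soundness} ($\vdash_{\textbf{D\L}_n}\varphi$ implies $\Vdash_{\textbf{\L}_n}\varphi$). It suffices to check that every axiom schema of $\textbf{D\L}_n$ is valid in every $\textbf{\L}_n$-model and that the rules preserve validity, and then induct on the length of a derivation. For a fixed model $M$ and state $s$, the map $\psi\mapsto\mathcal{I}_M(\psi,s)$ is a $\textbf{\L}_n$-valuation, so the propositional Łukasiewicz axioms and the constant axioms are valid by the soundness half of Theorem~\ref{theorem1}. For the modal axioms one unfolds the semantic clause for $\mathcal{I}_M$ and argues inside $\textbf{\L}_n$: the two duality axioms use that $\bar c\to(-)$ commutes with $\bigwedge$ and that residuation converts a clause of the form $\bigvee(\,\cdot\odot\,\cdot)$ into one of the form $\bigwedge(\,\cdot\to\,\cdot)$; the $\cup$, $;$ and $\cap$ axioms are read off the definitions of $R_{\pi_0}\cup R_{\pi_1}$, $R_{\pi_0}\circ R_{\pi_1}$ and $R_{\pi_0}\otimes R_{\pi_1}$ respectively (for $\cap$, evaluating $R\otimes Q$ on a decomposition $T\cup W$ is exactly what produces the shapes $\langle\pi_i\rangle\bar 1\to[\pi_j]\varphi$ and $\langle\pi_0\rangle\varphi\wedge\langle\pi_1\rangle\varphi$); the test axioms use the case split in $\mathcal{I}_M(\psi?,s,T)$; and the iteration fixpoint and induction axioms use $R^{*}=\bigvee_{k\ge 0}R^{(k)}$ together with the chain property $R^{(k)}\le R^{(k+1)}$ and induction on $k$. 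Finally (MP) preserves validity because $1\to a=a$ in $\textbf{\L}_n$, and the monotonicity rule does so by monotonicity of the modal clauses in the value of the argument formula.

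\textbf{Completeness} ($\Vdash_{\textbf{\L}_n}\varphi$ implies $\vdash_{\textbf{D\L}_n}\varphi$). We argue contrapositively. Assume $\nvdash_{\textbf{D\L}_n}\varphi$, i.e.\ $\varphi\notin Thm_n$. By Lemma~\ref{lemma3} there is $s\in S^n$ with $\mathcal{I}^n(\varphi,s)=s(\varphi)<1$. Put $\Gamma=FL(\varphi)$; by the standard Fischer--Ladner counting $\Gamma$ is a finite closed subset of $\Phi_n$, and $\varphi\in\Gamma$. By Theorem~\ref{theorem2}, $\mathcal{I}^n(\varphi,s)=\mathcal{I}^n_\Gamma(\varphi,|s|_\Gamma)$, so $\mathcal{I}^n_\Gamma(\varphi,|s|_\Gamma)<1$. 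Now $\mathcal{M}^n_{c,\Gamma}=\langle S^n_\Gamma,E^n_\Gamma,V^n_\Gamma\rangle$ is a \emph{finite} $\textbf{\L}_n$-model: each $|s|_\Gamma\in S^n_\Gamma$ is determined by the restriction $s\restriction\Gamma$, and there are at most $n^{|\Gamma|}$ such restrictions; taking the atomic part of $E^n_\Gamma$ as the reachable relation, the interpretation generated from this data is precisely $\mathcal{I}^n_\Gamma$ (this is how $\mathcal{I}^n_\Gamma$ was defined on $\Pi_{\Gamma,n}$, which contains every program occurring in $\varphi$). Hence the interpretation of $\mathcal{M}^n_{c,\Gamma}$ sends $(\varphi,|s|_\Gamma)$ to $\mathcal{I}^n_\Gamma(\varphi,|s|_\Gamma)<1$, so $\varphi$ is not valid in $\mathcal{M}^n_{c,\Gamma}$, and therefore $\not\Vdash_{\textbf{\L}_n}\varphi$.

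\textbf{Main obstacle.} The genuine difficulty is already absorbed into Lemma~\ref{lemma4} and Theorem~\ref{theorem2} --- the filtration statement that $\mathcal{I}^n_\Gamma$ agrees with $\mathcal{I}^n$ on $\Gamma$ --- which we are taking as given; this is where the non-interdefinability of $[\pi]$ and $\langle\pi\rangle$, the iteration case, and the concurrency case all have to be handled simultaneously. For the theorem itself the two points that still need care are: (i) the soundness of the iteration and concurrency axioms, where one must match $R^{*}=\bigvee_k R^{(k)}$ and the $R\otimes Q$-clause on $T\cup W$ exactly to the syntactic forms; and (ii) verifying that $\mathcal{M}^n_{c,\Gamma}$ is literally an instance of the model definition (an atomic reachable relation plus its generated interpretation) so that ``$\varphi$ not valid in $\mathcal{M}^n_{c,\Gamma}$'' legitimately witnesses $\not\Vdash_{\textbf{\L}_n}\varphi$.
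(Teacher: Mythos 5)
Your proposal is correct and follows essentially the same route as the paper: the completeness direction is exactly the paper's argument (assume $\nvdash_{\textbf{D\L}_n}\varphi$, obtain $s\in S^n$ with $s(\varphi)<1$ via Lemma~\ref{lemma3}, pass to the filtration $\mathcal{M}^n_{c,\Gamma}$ for $\Gamma=FL(\varphi)$, and invoke Theorem~\ref{theorem2} to get a countermodel). The only difference is that you also spell out the soundness direction by induction on derivations, which the paper's proof leaves entirely implicit; that is a welcome addition rather than a divergence.
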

\begin{proof}
For the only if direction, we assume that $\nvdash_{\textbf{D\L}_n} \varphi$. Consider the FL-closure of $\{ \varphi\}$ and denote it as $\Gamma$. From lemma \ref{lemma3}, there exists $s \in S^n$ such that $s(\varphi)$. Consider the canonical model $\mathcal{M}^n_{c,\Gamma}$ constructed as above. Using Theorem \ref{theorem2}, we have $\mathcal{M}^n_{c,\Gamma}\nVdash_{\textbf{\L}_n}\varphi$ which contradicts to the assumption.
\end{proof}
\section{Conclusion}
We studied many-valued concurrent propositional dynamic logics through relational models where both statisfaction of formulas and accessibility relations are evaluated in finite MV-chains. We provides a sound and weakly complete axiomatization based on extending the framework from many-valued bimodal logics in \cite{vidal2017finite} and classical concurrent PDL in \cite{goldblatt1992parallel}. We believe this research direction lays the groundwork for future investigations.

For the future research directions, let us mention two here. Firstly, the revision and extension of PDL toward modeling concurrency have been studied in various models such as $\pi$-calculus \cite{benevides2010propositional}, Petri nets \cite{lopes2014propositional}, and operational semantics \cite{acclavio2024propositional}. It would be interesting to study the PDL in the setting of concurrency with imprecise concepts. Secondly,
in light of \cite{sedlar2024completeness} to use the finitely weighted Kleene algebra with tests as an algebraic semantic for graded PDL, it is interesting to explore the algebraic framework of graded concurrent PDL. A first step of this goal would be expanding the concurrent Kleene algebras with tests proposed in \cite{jipsen2014concurrent}. 

\subsubsection{\ackname} 
This work was supported by the Czech Science Foundation grant 22-16111S for the project {\it GRADLACT: Graded Logics of Action} and Charles University research grant GAUK 101724 for the project {\it Zkoumání základu uvažování v racionálních interakcích za nejistých podmínek}.
\newpage
\bibliographystyle{plain}
\bibliography{bibliography.bib}
\end{document}